\tikzstyle{arg}=[draw,circle,fill=gray!15,inner sep=1pt,minimum size=.5cm]
\tikzstyle{attack}=[->,left,thick,>=stealth]
\newcommand{\F}{F}
\newcommand{\CF}{\mathcal{F}}
\newcommand{\naive}{\mathit{na}}
\newcommand{\cf}{\mathit{cf}}
\newcommand{\csigma}{\operatorname{\mathit{cl-\sigma}}}
\newcommand{\cnaive}{\operatorname{\mathit{cl-naive}}}
\newcommand{\cl}{\mathit{cl}}
\newcommand{\wfCAF}{\mathit{wf}}
\newcommand{\CAF}{\mathit{CAF}}
\newcommand{\U}{U}
\newcommand{\Conc}{\textit{Con}}
\newtheorem{definition}{Definition}
\newtheorem{proposition}{Proposition}
\newtheorem{reduction}{Reduction}
\title{Concurrence for well-formed CAFs:\\Naive Semantics} %\tnoteref{mytitlenote}}
\author{Rafael Kiesel, Anna Rapberger}
\date{\normalsize TU Wien, Institute of Logic and Computation, Austria\\
\{rafael.kiesel,anna.rapberger\}@tuwien.ac.at}
\begin{document}
\maketitle

\begin{abstract}
In the area of claim-based reasoning in abstract argumentation, a claim-based semantics is said to be concurrent in a given framework if all its variants yield the same extensions.
In this note, we show that the concurrence problem with respect to naive semantics is $\coNP$-hard for well-formed CAFs.
This solves a problem that has been left open in \cite{DvorakGRW21}.
\end{abstract}

\section{Introduction}

%\emph{Claim-augmented argumentation frameworks (CAFs)} \cite{DvorakW20} have been introduced to analyze conclusion-focused non-monotonic reasoning problems.
%Formally, CAFs correspond to directed labeled graphs where the nodes correspond to arguments, the labels correspond to the arguments claims, and the arcs indicate (unidirectional) conflicts between them.
\emph{Claim-augmented argumentation frameworks (CAFs)} \cite{DvorakW20} extend \emph{abstract argumentation frameworks (AFs)} \cite{Dung95} by a function that assigns a claim to each argument.
%by enriching argumentation research with graph-theoretical means.
%The compact representation serves as an ideal playground to investigate characteristics of argumentation such as expressiveness \cite{}, dynamics \cite{}, or complexity of reasoning \cite{}. 
Formally, CAFs correspond to directed labeled graphs where the nodes correspond to arguments, the labels correspond to the arguments claims, and the arcs indicate (unidirectional) conflicts between them.
As with arguments in AFs, the acceptance status of claims is decided via \emph{argumentation semantics}. %\cite{Dung95,Verheij96,Caminada06}.
There are often several possibilities to lift AF semantics to claim-level \cite{DvorakRW20c};
%The additional level of claims gives rise to several variants to lift argument-based semantics to claim-level \cite{}. 
considering \emph{naive semantics} which is based on maximization of conflict-free sets gives rise to two claim-based variants: 
The so-called \emph{inherited} variant, which performs maximization on argument-level, and the \emph{claim-level semantics}, which performs maximization over claim-sets.
Deciding whether two different variants of a semantics yield the same claim-based outcome for a given CAF (the so-called \emph{concurrence problem}) can be computationally challenging:
As shown in \cite{DvorakGRW21}, the concurrence problem is $\coNP$-complete with respect to naive semantics.

In this note, we show that deciding concurrence with respect to naive semantics is $\coNP$-hard even for \emph{well-formed CAFs}, an important sub-class of CAFs that impose restrictions on the attack relation.
Well-formed CAFs satisfy a quite natural behavior of conflicts: A CAF is well-formed if arguments with the same claims attack the same arguments. 
This closes the complexity gap in \cite{DvorakGRW21} for the concurrence problem.

\section{Preliminaries}\label{sec:prel}

We introduce abstract argumentation and claim-based reasoning following \cite{Dung95} and \cite{DvorakRW20c}.

\paragraph{Abstract Argumentation.}
%\subsubsection{Abstract Argumentation}
We fix a non-finite background set $\U$. An argumentation framework (AF) \cite{Dung95} is a directed graph $\F = (A,R)$ where $A\subseteq\U$ represents a set of arguments and $R\subseteq A\times A$ models \textit{attacks} between them. 
%We let $E^+_F = \{ a\in A \mid E\text{ attacks }a \}$ for a set $E\subseteq A$.
%In this paper we consider finite AFs only. %and we use $\m{F}$ for the set of all these graphs. 
%For a given $\F = (B,S)$ we let $A(\F) = B$ and $R(F) = S$.
%For $U\subseteq A$ we define the restriction of $\F$ to $U$ as usual, i.e.\ $F\!\!\downarrow_U = (A\cap U , R\cap ({U\times U}))$.
For two arguments $a,b\in A$, if $(a,b)\in R$ we say that $a$ \textit{attacks} $b$ as well as $a$ \textit{attacks} (the set) $E$ given that $b\in E\subseteq A$.
%We frequently use the so-called \textit{range} of a set $E$ defined as $E^\oplus_F = E\cup E^+_F$ where $E^+_F = \{ a\in A \mid E\text{ attacks }a \}$.
%The \textit{$E$-reduct of $\F$} is the AF $F^E = (E^*, R \cap (E^* \times E^*))$ where $E^* = A \setminus E^\oplus$.
%This means, $\F^E$ is the subframework of $\F$ obtained by removing the range of $E$.%, \ie $F^E = F\!\!\downarrow_{A\setminus E^\oplus_F}$. 
A set $E\subseteq A$ is \emph{conflict-free} in $\F$ ($E\in\cf(\F)$) iff for no $a,b\in E$, $(a,b)\in R$. 
%We say 
%$E$ \textit{defends} an argument $a$ if any attacker of $a$ is attacked by some argument of $E$.
%We call a set $E\in \cf(\F)$ is \emph{admissible} in $\F$ ($E\in \adm(\F)$) iff it defends all its elements (we also say, $E$ defends itself).
A \emph{semantics} is a function $\sigma$ with $F\mapsto\sigma(F)\subseteq 2^A$.
%This means, given an AF $\F = (A,R)$ a semantics returns a set of subsets of $A$. These subsets are called $\sigma$-\emph{extensions}.
In this note we focus on \emph{naive semantics}.
\begin{definition}
For an AF $\F=(A,R)$, a set $E\subseteq A$ is \emph{naive} ($E\in \naive(\CF)$) iff $E$ is $\subseteq$-maximal in $\cf(\F)$.
\end{definition}

\paragraph{Reasoning about claims.}
%\subsubsection{Reasoning about Claims}
A \emph{claim-augmented argumentation framework (CAF)} \cite{DvorakW20} is a triple $\CF=(A,R,\cl)$
where $F=(A,R)$ is an AF and $\cl$
%$\cl:A\rightarrow \mathcal{C}$ 
is a function which assigns a claim to each argument in $A$.
%; $\mathcal{C}$ is a set of possible claims.
The claim-function is extended to sets in the natural way, i.e., for a set $E\subseteq A$, we let $\cl(E)=\{\cl(a)\mid a\in E\}$.
A CAF is \emph{well-formed} iff $a^+_{F}=b^+_{F}$ for all $a,b\in A$ with $\cl(a)=\cl(b)$, i.e., arguments with the same claim attack the same arguments.
The literature offers several ways to extend semantics for AFs that involve claims respectively arguments to a different extent. 
%Inherited semantics as introduced in \cite{} first evaluate the underlying AF before inspecting the claims of the arguments in the extensions, i.e., for a semantics $\sigma$, its inherited variant is defined as $\sigma\!_c(\CF)=\{\cl(E)\mid E\in\sigma(\F)\}$. 
We introduce \emph{inherited} and \emph{claim-level semantics} for naive semantics that perform maximization in different stages of the evaluation. 
\begin{definition}
For a CAF $\CF=(A,R,\cl)$, $F=(A,R)$, and a semantics $\sigma$, we let $\sigma\!_c(\CF)=\{\cl(E)\mid E\in\sigma(\F)\}$. A set $S\subseteq \cl(A)$ is
\begin{itemize}
\item  \emph{i-naive} ($S\in \naive\!_c(\CF)$) iff $S\in\naive\!_c(\CF)$, i.e., there is $E\subseteq A$ with $S = \cl(E)$ and $E$ is $\subseteq$-maximal in $\cf(F)$;
\item \emph{cl-naive} ($S\in \cnaive(\CF)$) $S$ is $\subseteq$-maximal in $\cf\!_c(\CF)$.
\end{itemize} 
%
%
%($S\in \naive\!_c(\CF)$) is \emph{i-naive} iff there is $E\subseteq A$ 
%with $E\in \naive(F)$ \todo{and $S = \cl(S)$, oder?}; $S$ is \emph{cl-naive} ($S\in \cnaive(\CF)$) $S$ is $\subseteq$-maximal in $\cf\!_c(\F)$.
\end{definition}
As shown in \cite{DvorakRW20c}, $\cnaive(\CF)\subseteq \naive\!_c(\CF)$.

\section{The Concurrence Problem for Naive Semantics}
An interesting problem that arises when considering different variants of semantics for CAFs is the so-called \emph{concurrence problem}.
\begin{center}
	\vspace{-0.2cm}
	\begin{mathproblem}{$\Conc_{\sigma}^\Delta$, $\Delta\in \{\CAF,\wfCAF\}$}
		Input: & A CAF $\CF$ (if $\Delta=\CAF$)/a well-formed CAF $\CF$ (if $\Delta=\wfCAF$)\\
		Output: & \textsc{true} iff %$\rho(\CF)=\rho(\CG)$, i.e., 
		$\sigma\!_c(\CF)=\csigma(\CF)$\\
	\end{mathproblem}
\end{center}
%\begin{definition}[Concurrence] For a semantics $\sigma$ and $\Delta\in \{\CAF,\wfCAF\}$, we let $\Conc_{\sigma}^\Delta$ denote the following problem: Given a 
%	CAF $\CF$, does it hold that $\sigma\!_c(\CF)=\csigma(\CF)$?
%\end{definition}
For naive semantics, the problem can be formulated as follows: Given a CAF $\CF$, is it the case that maximization on argument-level yields the same accepted sets as maximization on claim-level?

In \cite{DvorakGRW21} it has been shown that $\Conc_{\sigma}^\CAF$ is $\coNP$-complete. Since concurrence for well-formed CAFs with respect to naive semantics is a special case of CAFs we obtain upper bounds for $\Conc_{\sigma}^\wfCAF$, that is, $\Conc_{\sigma}^\wfCAF$ is in $\coNP$.

We restate the following proposition \cite{DvorakGRW21}.
\begin{proposition}
\label{lem:incomp_pref_naive}
For a CAF $\CF=(A,R,\cl)$,
$\naive\!_c(\CF)=\cnaive(\CF)$ if and only if $naive\!_c(\CF)$ is incomparable.
\end{proposition}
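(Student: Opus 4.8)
The plan is to exploit, by purely order-theoretic means, the position of $\naive\!_c(\CF)$ inside the poset $(\cf\!_c(\CF),\subseteq)$. First I would record two elementary facts. (i) Since every naive extension is conflict-free, $\naive\!_c(\CF)\subseteq\cf\!_c(\CF)$. (ii) $\naive\!_c(\CF)$ is \emph{cofinal} in $(\cf\!_c(\CF),\subseteq)$: given $S=\cl(E)\in\cf\!_c(\CF)$ with $E\in\cf(\F)$, extend $E$ to some $\subseteq$-maximal conflict-free set $E'\supseteq E$; then $E'\in\naive(\F)$ and $S\subseteq\cl(E')\in\naive\!_c(\CF)$.

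From (i) and (ii) I would derive the key observation that the $\subseteq$-maximal elements of $\naive\!_c(\CF)$ coincide with the $\subseteq$-maximal elements of $\cf\!_c(\CF)$, i.e.\ with $\cnaive(\CF)$. Indeed, a $\subseteq$-maximal element of $\cf\!_c(\CF)$ already lies in $\naive\!_c(\CF)$ by (ii) and is trivially still maximal there; conversely, if $S$ is $\subseteq$-maximal in $\naive\!_c(\CF)$ but $S\subsetneq T$ for some $T\in\cf\!_c(\CF)$, then by (ii) $T\subseteq T'$ for some $T'\in\naive\!_c(\CF)$, so $S\subsetneq T'$, contradicting maximality of $S$ in $\naive\!_c(\CF)$.

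For the forward direction I would assume $\naive\!_c(\CF)=\cnaive(\CF)$: every element of $\cnaive(\CF)$ is by definition a $\subseteq$-maximal element of $\cf\!_c(\CF)$, and two distinct $\subseteq$-maximal elements of a poset are never comparable; hence $\naive\!_c(\CF)$ is incomparable. For the backward direction I would assume $\naive\!_c(\CF)$ is incomparable: then no element of $\naive\!_c(\CF)$ is strictly below another element of $\naive\!_c(\CF)$, so every $S\in\naive\!_c(\CF)$ is $\subseteq$-maximal in $\naive\!_c(\CF)$, hence $S\in\cnaive(\CF)$ by the key observation; thus $\naive\!_c(\CF)\subseteq\cnaive(\CF)$, and together with the inclusion $\cnaive(\CF)\subseteq\naive\!_c(\CF)$ from \cite{DvorakRW20c} we conclude $\naive\!_c(\CF)=\cnaive(\CF)$.

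The argument is short; the only step requiring a little care is the key observation linking $\subseteq$-maximality in $\naive\!_c(\CF)$ to $\subseteq$-maximality in $\cf\!_c(\CF)$, which is exactly the place where cofinality (fact (ii)) is used. I would also remark that the proof makes no use of well-formedness, so the proposition holds for arbitrary CAFs, in line with how it is stated.
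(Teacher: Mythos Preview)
The paper does not actually prove this proposition; it is merely restated from \cite{DvorakGRW21} without argument. Your proof is correct and self-contained: the cofinality observation (your fact~(ii)) is precisely what is needed to identify the $\subseteq$-maximal elements of $\naive\!_c(\CF)$ with $\cnaive(\CF)$, after which both directions are immediate. The only minor remark is that your ``key observation'' already subsumes the cited inclusion $\cnaive(\CF)\subseteq\naive\!_c(\CF)$ (it is the easy half), so you do not strictly need to invoke \cite{DvorakRW20c} at the end---but doing so does no harm.
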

%\begin{proof}
%Let $\sigma=\pref$ (the proof for $\sigma=\naive$ is analogous).
%Assume $\pref\!_c(\CF)$ is incomparable and let $S\in \pref\!_c(\CF)$.
%Then $S\in \adm\!_c(\CF)$. Now assume there is $T\in \adm\!_c(\CF)$ 
%with $T\supset S$. Consider a $\adm$-realization $E$ of $T$ in $\CF$
%and let $E'\in \pref((A,R))$ with $E\subseteq E'$. But then 
%$\cl(E')\in \pref\!_c(\CF)$ and $\cl(E')\supseteq
%T\supset S$, contradiction to $\pref\!_c(\CF)$ being incomparable.
%\end{proof} 
Thus it suffices to verify incomparability of $\naive\!_c(\CF)$.
An $\NP$ procedure for the complementary problem is by a standard guess and check procedure:
Guess $E,G\subseteq A$ and check (i) $E,G\in \sigma((A,R))$ and (ii) $\cl(E)\subset \cl(G)$. 
The former can be checked in time polynomial in the number of arguments in $\CF$.

Next we show that verifying incomparability of $\naive\!_c(\CF)$ is $\NP$-hard even if $\CF$ is well-formed. We will first define the base reduction, which we slightly extend to obtain the the reduction for the well-formed case.

\begin{reduction}
\label{red:wf}
Let $\varphi$ be given by a set of clauses 
$C=\{cl_1, \dots, cl_n\}$ over
atoms in $X$. 
We construct $(A,R,\cl)$ with
%\begin{itemize}
%\item 
%$A=X\cup \bar X \cup C\cup \{\varphi\} \cup \{a_1, a_2\}$, with $\bar X=\{\bar x\mid x\in X\}$;
%\item 
\begin{align*}
A = & X\cup \bar X \cup C\cup \{\varphi\} \cup \{a_1, a_2\}, \text{ with } \bar X=\{\bar x\mid x\in X\},\\
R  =	& \{(x,cl)\mid cl\in C,x\in cl\} \cup \{(\bar x,cl)\mid cl\in C,\neg x\in cl\}\\
		&\cup \{(x,\bar x),(\bar x, x)\mid x\in X\} \cup \{(cl_i,\varphi)\mid i\leq n\} \cup \{(\varphi, a_2)\},
\end{align*}
%$\begin{array}{rcl}
%A& = & X\cup \bar X \cup C\cup \{\varphi\} \cup \{a_1, a_2\}, \text{ with } \bar X=\{\bar x\mid x\in X\},\\
%R 	& =	& \{(x,cl)\mid cl\in C,x\in cl\} \cup \{(\bar x,cl)\mid cl\in C,\neg x\in cl\}\\
%	& 	&\cup \{(x,\bar x),(\bar x, x)\mid x\in X\} \cup \{(cl_i,\varphi)\mid i\leq n\} \cup \{(\varphi, a_2)\}.
%\end{array}
%$
%\end{itemize}
and $\cl(x)=x$, $\cl(\bar x)=\bar x, \cl(cl_i)=cl_i, \cl(\varphi) = \varphi$ and $\cl(a_i) = a$.
\end{reduction}
An example of this reduction is given in Figure~\ref{fig:red_wf}. 
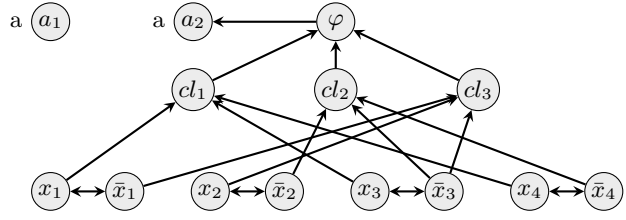
\begin{figure}[t]
\centering
\begin{tikzpicture}[>=stealth,xscale=0.75, yscale=0.5]
\small
		\path (0,0.5)
			node[arg](phi) {$\varphi$};
		\node[arg,label={left:a}] (a2) at (-2.5,0.5) {$a_2$};
		\node[arg,label={left:a}] (a1) at (-5, 0.5) {$a_1$};
		\draw[->, thick] (phi) -- (a2);
		\path 	(-2.5,-1.3) 
			node[arg](c1){$cl_1$}
			++(2.5,0) node[arg](c2){$cl_2$}
			++(2.5,0) node[arg](c3){$cl_3$};
		\path 	(-5,-4) 
					node[arg](z1){$x_1$}
			++(1.3,0) node[arg](nz1){$\bar x_1$}
			++(1.5,0) node[arg](z4){$x_2$}
			++(1.3,0) node[arg](nz4){$\bar x_2$}
			++(1.5,0) node[arg](z2){$x_3$}
			++(1.3,0) node[arg](nz2){$\bar x_3$}
			++(1.5,0) node[arg](z3){$x_4$}
			++(1.3,0) node[arg](nz3){$\bar x_4$}
			;
		\path [->,thick]
			(c3) edge (phi)
			(c2) edge (phi)
			(c1) edge (phi)
			;
		\path [left,->, thick]
			(z1) edge (c1)
			(z2) edge (c1)
			(z3) edge (c1)
			(nz2) edge (c2)
			(nz3) edge (c2)
			(nz4) edge (c2)
			(nz1) edge (c3)
			(nz2) edge (c3)
			(z4) edge (c3);
		\path [left,<->, thick]
			(z1) edge (nz1)
			(z2) edge (nz2)
			(z3) edge (nz3)
			(z4) edge (nz4);
\end{tikzpicture}
    \caption{  
    Reduction~\ref{red:wf} for a formula $\varphi$ which is
given by the clauses 
      $\{\{x_1,x_3,x_4\},\{\bar{x}_3 , \bar{x}_4 , \bar{x}_2)\},\{\bar{x}_1, \bar{x}_3 , x_2\}\}$.}\label{fig:red_wf}
\end{figure}

The intuition behind this reduction is the following: 
We take the smallest sub-CAF that is not concurrent.
This corresponds to $(\{a_1, a_2, \varphi\}, \{(\varphi, a_2)\},\cl)$ with $\cl(a_i) = a$ and $\cl(\varphi) = \varphi$, with naive extensions $\{a_1, \varphi\}$ and $\{a_1, a_2\}$. Observe that the second extension requires that $\varphi$ is not taken. We then add the additional part corresponding to a propositional formula in CNF that only allows $\varphi$ to not be included without additional changes, when the formula is satisfiable. Thus allowing us to reduce the concurrence problem to UNSAT.

In the example in Figure~\ref{fig:red_wf} we see that $\{a_1, \varphi, x_1, x_2, x_3, \bar{x_4}\}$ and $\{a_1, a_2, x_1, x_2, x_3, \bar{x_4}\}$ are naive extensions because $\{x_1, x_2, x_3, \bar{x_4}\}$ is a satisfying assignment of the formula.

Generally, the proof proceeds as follows.
\begin{proposition}
$\Conc_\naive^\wfCAF$ is $\coNP$-hard.
\end{proposition}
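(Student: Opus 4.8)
The plan is to establish $\coNP$-hardness by reducing from UNSAT via Reduction~\ref{red:wf}; equivalently, by the preceding discussion, to show that verifying incomparability of $\naive\!_c(\CF)$ is $\NP$-hard, so that its complement (concurrence, by Proposition~\ref{lem:incomp_pref_naive}) is $\coNP$-hard. First I would verify that the constructed CAF is indeed well-formed: the only claim shared by two arguments is $a$, carried by $a_1$ and $a_2$, and since neither $a_1$ nor $a_2$ attacks anything in $R$, we have $a_1^+_F = a_2^+_F = \emptyset$, so the well-formedness condition holds trivially. All other claims are unique to a single argument, so no further checks are needed.

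The heart of the argument is a correspondence between satisfying assignments of $\varphi$ and naive extensions of $F = (A,R)$ whose claim-sets witness non-incomparability. I would analyze the naive (i.e.\ $\subseteq$-maximal conflict-free) sets of $F$. Note $a_1$ is isolated, hence in every naive extension; the pairs $\{x,\bar x\}$ force exactly one of each literal-argument into any naive set; each clause-argument $cl_i$ can be included iff none of its attacking literals is present; $\varphi$ can be included iff no $cl_i$ is present; and $a_2$ can be included iff $\varphi$ is absent. The key observations I would prove are: (i) if $\tau$ is a satisfying assignment of $\varphi$, let $L_\tau$ be the corresponding set of literal-arguments; then every clause-argument is attacked by $L_\tau$, so $\{a_1\} \cup L_\tau \cup \{\varphi\}$ and $\{a_1\} \cup L_\tau \cup \{a_2\}$ are both naive in $F$, and their claim-sets are $\{a\} \cup \tau \cup \{\varphi\}$ and $\{a\} \cup \tau \cup \{a\} = \{a\} \cup \tau$, the latter being a strict subset of the former --- hence $\naive\!_c(\CF)$ is \emph{not} incomparable; (ii) conversely, if $\naive\!_c(\CF)$ is not incomparable, I would argue that the only way to obtain a claim-subset relation is precisely through a pair of extensions of this shape, which forces all clause-arguments to be simultaneously attacked by a consistent choice of literal-arguments, i.e.\ $\varphi$ is satisfiable. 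Combining (i) and (ii): $\naive\!_c(\CF)$ is incomparable iff $\varphi$ is unsatisfiable, so $\CF$ is concurrent iff $\varphi \in \mathrm{UNSAT}$, which gives $\coNP$-hardness.

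The main obstacle is direction (ii): ruling out "spurious" claim-inclusions between naive extensions that do not arise from the intended $\varphi$/$a_2$ swap. One has to check that, among the naive extensions of $F$, no two distinct claim-sets are comparable \emph{unless} there is a satisfying assignment --- in particular, that varying the literal choices or the clause-arguments cannot by itself create a strict claim-inclusion. This requires a careful case analysis: two naive sets $E, G$ with $\cl(E) \subsetneq \cl(G)$ must agree on the literal claims (since each $x$ or $\bar x$ claim, when present, pins down the literal-argument, and a naive set always contains exactly one of each pair, so the literal claims of any naive extension form a "full" consistent selection that cannot be a proper subset of another); they must then differ only on $\{\varphi, a, cl_1, \dots, cl_n\}$-claims, and a short argument shows the inclusion can only be the $\{a\} \subsetneq \{a, \varphi\}$ pattern, whose realizability as naive sets forces every $cl_i$ to be attacked, i.e.\ forces satisfiability of $\varphi$. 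I would also remark that since each naive extension contains exactly one literal per variable and membership of $cl_i$, $\varphi$, $a_2$ is then determined up to the $\varphi$-versus-$a_2$ choice, the structure of $\naive\!_c(\CF)$ is transparent enough to make this case analysis routine once set up correctly. Finally, the reduction is clearly polynomial-time computable, completing the proof.
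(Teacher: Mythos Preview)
Your proposal is correct and follows essentially the same approach as the paper: the identical reduction from UNSAT, the same two directions showing $\varphi$ unsatisfiable iff $\naive_c(\CF)$ is incomparable, and a comparable case analysis for the converse (you argue via the shape a comparable pair must take, the paper splits on whether $\varphi\in E$; both arrive at the same conclusion). Your explicit verification of well-formedness is a useful addition the paper omits; your closing remark that ``membership of $cl_i$, $\varphi$, $a_2$ is then determined up to the $\varphi$-versus-$a_2$ choice'' is slightly imprecise (for a non-satisfying literal selection $L$ there are still two naive extensions, $L\cup\{\varphi,a_1\}$ and $L\cup C_2(L)\cup\{a_1,a_2\}$), but this does not affect your main argument.
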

\begin{proof}
For hardness, we present a reduction from UNSAT: Let $\varphi$ be given by a set of clauses $C=
\{cl_1,\dots,cl_n\}$ over literals in $X$. W.l.o.g.\ we can assume that $\varphi$ does not contain tautological clauses, i.e., 
there is no $cl_i$, $i\leq n$ with $x,\bar x\in cl_i$ for any $x\in X$. Let $(A,R,\cl)$ be defined as in Reduction~\ref{red:wf}. 
%We construct a CAF $\CF=(A,R,\cl)$ with $\cl(x)=x$, $\cl(\bar x)=\bar x, \cl(cl_i)=cl_i, \cl(\varphi) = \varphi$ and $\cl(a_i) = a$. 
We will show $\varphi$ is unsatisfiable iff $\naive\!_c(\CF)$ is incomparable.

First asssume $\varphi$ is satisfiable and consider  a model $M$ of $\varphi$.
Let $E=M\cup \{\bar x\mid x\notin M\} \cup \{\varphi, a_1\}$. Clearly, $E$ is conflict-free; moreover, as $M$ satisfies each clause $cl_i$ there is either $x\in cl_i$ with $x\in M$ 
or $\bar x\in cl_i$ with $x\notin M$, thus $E$ attacks each $cl_i$. Since, also $\varphi$ and $a_1$ are in $E$ we the only argument left to consider is $a_2$, which is however attacked by $\varphi$ and can therefore not be included, while preserving conflict-freeness. We can conclude that $E$ is a subset-maximal conflict-free set. 
Moreover, $E' = M\cup \{\bar x\mid x\notin M\} \cup \{a_1, a_2\}$ is also a subset-maximal conflict-free set, since still every $cl_i$ is attacked and $\varphi$ attacks $a_2$.
It follows that $\naive\!_c(\CF)$ is not incomparable since
$\cl(E)= M\cup \{\bar x\mid x\notin M\} \cup \{\varphi, a\}$ is a strict superset of $M\cup \{\bar x\mid x\notin M\} \cup \{a\} = \cl(E')$ and both are contained in $\naive\!_c(\CF)$.

Now assume that $\varphi$ is unsatisfiable. Let $E$ be a subset-maximal conflict-free set. If $\varphi \in E$ it follows that none of the $cl_i$ are in $E$. Therefore, it holds that for each $x$ exactly one of $x$ and $\bar{x}$ is in $E$. Furthermore, it always holds that $a_1$ is in $E$. This means that $\cl(E) = \{\varphi, a\} \cup \{x \mid x \in X'\} \cup \{\bar{x} \mid x \not\in X'\}$ for some subset $X'$ of $X$. 

Case I: Assume there is a naive extension $E'$ such that $\cl(E) \subsetneq \cl(E')$. Then $E'$ must contain $\cl_i$ for some $i$, $x$ for some $x \not\in X'$ or $\bar{x}$ for some $x \in X'$. This is not possible, since this would imply $E \subsetneq E'$, which is a contradiction to the assumption that $E$ and $E'$ are naive extensions.

Case II: Assume there is a naive extension $E'$ that contains $\varphi$ such that $\cl(E') \subsetneq \cl(E)$. Then there must be some $x$ with $x \in X'$ or $\bar{x}$ with $x \not\in X'$ that is not contained $E'$. This is not possible, since this would imply $E' \subsetneq E$, which is a contradiction to the assumption that $E$ and $E'$ are naive extensions.

Case III: Assume there is a naive extension $E'$ that does not contain $\varphi$ such that $\cl(E') \subsetneq \cl(E)$. This is impossible, as $\varphi$ is unsatisfiable, which entails that there is a clause $cl_i$ that is not satisfied by $X'$. Therefore, $cl_i$ is not attacked by $E \setminus \varphi$ and also not $E'$. Thus $E'$ contains at least one $cl_i$, which means that also $\cl(E')$ contains $cl_i$.

We see that any naive extension $E$ that contains $\varphi$ is incomparable to any other naive extension on claim level.

Next we show that the same holds for any naive extension $E$ that does not contain $\varphi$. It follows that $E$ contains $a_1, a_2$. Since $\CF' = (A \setminus \{a_1, a_2, \varphi\}, R \setminus \{(\varphi, a_2)\} \cup \{(cl_i, \varphi) \mid i = 1, \dots , n\}, \cl)$, the rest of the CAF, is such that all arguments have distinct claims, there is a one to one correspondence between $\naive\!_c(\CF')$ and $\naive(\CF')$, which implies that they are all incomparable with one another.
\end{proof}

\section{Conclusion}
We see that even for well formed CAFs the concurrence problem is $\coNP$-complete, as it is the case for general CAFs. This shows that while well formedness is an interesting property that only allows a fragment of the CAFs that might be deemed more reasonable, it does not lead to CAFs that are simpler, at least with respect to the concurrence problem.

%Speaking in graph-theoretical terms, we have shown that deciding whether the maximal independent sets (in terms of the labels of the nodes) of a directed labeled graph where nodes with the same label have the same outgoing edges are incomparable is $\coNP$-complete.
%While the direction of the arcs do not play a role for in the context of independent sets, we observe that the well-formedness condition poses certain restrictions on the structure of the graph.

\section*{Acknowledgements}
%\todo{We need this, right?}
This work has been supported by the Austrian Science Fund (FWF) Grant W1255-N23.

\bibliography{argumentation}
\end{document}